\newtheorem{theorem}{Theorem}[section]
\newtheorem{corollary}[theorem]{Corollary}
\theoremstyle{definition}
\newtheorem{definition}[theorem]{Definition}
\newtheorem{example}[theorem]{Example}
\theoremstyle{remark}
\numberwithin{equation}{section}
\newcommand{\Iidentity}{\hbox{\upshape \small1\kern-3.3pt\normalsize1}}
\def\blfootnote{\xdef\@thefnmark{}\@footnotetext}
\begin{document}

\title[An extension of Wiener integration with the use of operator theory]{An extension of Wiener integration with the use of operator theory}

\author{Palle E. T. Jorgensen}
\address{Department of Mathematics, The University of Iowa, Iowa City, IA52242, USA}
\curraddr{}
\email{jorgen@math.uiowa.edu}
\urladdr{http://www.math.uiowa.edu/~jorgen}
\thanks{Work supported in part by the U.S. National Science Foundation}

\author{Myung-Sin Song}
\address{Department of Mathematics and Statistics, Southern Illinois University Edwardsville, Edwardsville, IL62026, USA}
\curraddr{}
\email{msong@siue.edu}
\urladdr{http://www.siue.edu/~msong}

\subjclass{Primary 46M05, 47B10, 60H05, 62M15}
\date{December, 2008}


\keywords{Hilbert space, Tensor Product, Trace-class, Spectral Theorem, Harmonic Analysis, Fractal Analysis, Karhunen-Lo\`{e}ve Transforms, Stochastic Integral}



\begin{abstract}
With the use of tensor product of Hilbert space, and a diagonalization 
procedure from operator theory, we derive an approximation formula for a 
general class of stochastic integrals. Further we establish a generalized 
Fourier expansion for these stochastic integrals. In our extension, we 
circumvent some of the limitations of the more widely used stochastic 
integral due to Wiener and Ito, i.e., stochastic integration with respect 
to Brownian motion. Finally we discuss the connection between the two 
approaches, as well as a priori estimates and applications.
\end{abstract}

\maketitle \tableofcontents

\section{Introduction}
\label{sec:1}

Recently there has been increase in the number of applications of 
stochastic integration and stochastic differential equations (SDEs). In 
addition to the traditional applications in physics and dynamics, stochastic 
processes have found uses  in such areas as option pricing in finance, 
filtering in signal processing, computations biological models. This fact 
suggests a need for a widening of the more traditional approach centered 
about Brownian motion $B(t)$ and Wiener's integral.

Since SDEs are solved with the use of stochastic integrals, we will focus 
here on integration with respect to a wider class of stochastic processes 
than has previously been considered. In evaluation of a stochastic integral 
we deal with the term $dB(t)$ by making use of the basic properties of 
Brownian motion, such as the fact that $B(t)$ has independent increments. If 
instead $X(t)$ is an arbitrary stochastic process, it is not at all clear 
how to make precise a stochastic integration with respect to $dX(t)$. We 
will develop a method, based on a Karhunen-Lo\`{e}ve diagonalization, for 
doing precisely that. 

The theory of stochastic integrals is well developed, see e.g., 
\cite{Ku06, ItMc65}. For many applications, such as the solutions to 
stochastic differential equations in physics and finance, it is important 
to have tools for evaluating integrals with respect to $dB$ where $B$ is 
Brownian motion.  The reason for the technical issues involved in the 
computation of stochastic integrals can be understood this way:  A naive 
approach runs into difficulties, for example because the length of Brownian 
paths is infinite, and because Brownian paths are discontinous (with 
probability one). Wiener and Ito offered a, by now, well known way around 
this difficulty. The idea of Wiener in fact is operator theoretic: It is to 
establish the value of an integral as a limit that takes place in a Hilbert 
space of random variables. This is successful because of the existence of 
an isometry between this Hilbert space on the one hand and a standard 
$L^{2}-$Lebesgue space on the other.

In this paper we extend this operator theoretic approach to a much wider 
class of stochastic integrals, i.e., integration with respect to $dX$ where 
$X$ belongs to a rather general class of stochastic processes. And we give 
some applications.

In the proof of our theorem we make use of a result from two earlier papers 
\cite{JS07, JS08} by the coauthors. The idea is again operator theoretic, 
and it is based on an application of von Neumann's spectral theorem to an 
integral operator directly associated with the process $X$ under 
consideration.

While the applications of stochastic integrals to physics (e.g., \cite{BC97}, 
and their interplay with operator theory (e.g., \cite{AL08a})are manifold, the 
idea of exploring and extending the scope in the present direction appears to 
be new. The need for such an extension is convincing:  For example, physical 
disturbances or perturbations will typically take you outside the particular 
path-space framework where the theory was initially developed. 

Earlier approaches to stochastic integrals with reproducing kernels include 
\cite{AL08a, AAL08, AL08b} and operator theory \cite{JM08}; and papers 
exploring physical ramifications: \cite{BC97, BDGJL07, Hu07a, Hu07b}. Although 
the papers cited here with physics applications represent only the tip of an 
iceberg!

\section{Notation and Definitions}
\label{sec:2}

To make precise the operator theoretic tools going into our construction, we 
must first introduce the ambient Hilbert spaces. Since stochastic integrals 
take values in a space of random variables, we must specify a fixed 
probability space $\Omega$, with sigma algebra and probability measure. In 
the case of Brownian motion, the probability space amounts to  the standard 
construction of Wiener and Kolmogorov: The essential axiom in that case is 
that all finite samples are jointly Gaussian, but we will consider general 
stochastic processes, and so we will not make these restricting assumptions 
on the sample distributions and on the underlying probability space. For more 
details on this case, see section \ref{sec:4}.

The kind of integrals we consider presently are stated precisely in 
Definition \ref{D:2.1}, eq (\ref{eq2.5}) below. I particular, initially we 
consider only functions of time in the integrant, so  $f(t) dX_{t}$. 
When the stochastic process  $X$ is given, we show (Theorem \ref{T:3.1})  
that the corresponding integrals live in  a Hilbert space which is a direct 
sum of standard Lebesgue Hilbert spaces carrying the function $f(t)$. In 
the case of Brownian motion, we show (Example \ref{E:4.1}) that the direct 
sum representation then only has one term. 

We now list the symbols and the terminology.
\begin{itemize}
\item $L^{2}$: an $L^{2}$-space. 
\item $L^{2}(\mathbb{R})$: all $L^{2}$-functions on $\mathbb{R}$. 
\item $J \subset \mathbb{R}$ a finite closed interval. 
\item $L^{2}(J)$: $L^{2}$ with respect to the Lebesgue measure restricted
      to $J$. 
\item $(\Omega, \mathcal{S}, P)$: a fixed probability space. 
\item $\Omega$: sample space. 
\item $\mathcal{S}$: some sigma algebra of subsets of $\Omega$. 
\item $P$: a probability measure defined on $\mathcal{S}$. 
\item $L^{2}(J \times \Omega, m \times P)$: the $L^{2}$-space on 
      $J \times \Omega$ with respect to product measure $m \times P$ where 
      $m$ denotes Lebesgue measure.
\end{itemize}

\begin{enumerate}[(1) ]
\item $X_{t}: \Omega \to \mathbb{R}$, $t \in \mathbb{R}$ a stochastic 
      process, 
\item $X_{t}(\omega):=X(t,\omega)$, $t \in \mathbb{R}$, 
      $\omega \in \mathbb{R}$. 
\item $E(\cdot):=\int_{\Omega} \cdot dP$: the expectation with respect to 
      $(\Omega, \mathcal{S}, P)$. \\
      Restricting Assumptions:
      \begin{enumerate}[(i) ]
      \item $X \in L^{2}(J \times \Omega, m \times P)$ for all finite 
            intervals $J \subset \mathbb{R}$. 
      \item $(s, t) \longmapsto E(X_{s} X_{t})$ is continuous on 
            $J \times J$. ($E(X_{t})=0$). 
      \item For all $J$, and all $s \in J$, the function, 
        \begin{equation}
        \label{eq2.1}
          t \longmapsto E(X_{s} X_{t}) 
        \end{equation} 
            is of bounded variation.
      \end{enumerate}
      For $J \subset \mathbb{R}$ fixed, we consider partitions 
      \begin{equation}
      \label{eq2.2}
        \pi: t_{0} < t_{1} < \cdots < t_{n-1} < t_{n} =: t, \quad 
        J=[t_{0}, t];
      \end{equation}
      and we set 
      \begin{equation}
      \label{eq2.3}
        |\pi|:= \max_{i}(t_{i+1}-t_{i}), \quad \mbox{ and} \quad 
        \Delta t_{i}:=t_{i+1}-t_{i},  \quad 0 \leq i < n.
      \end{equation}

      If $f:\mathbb{R} \to \mathbb{C}$ is continuous, we set
      \begin{equation}
      \label{eq2.4}
        S_{\pi}(f, X):= \sum_{i=0}^{n-1}f(t_{i})
        (X_{t_{i+1}}-X_{t_{i}}).
      \end{equation}
      \begin{definition}
      \label{D:2.1}
        By a stochastic integral, we mean a limit
        \begin{equation}
        \label{eq2.5}
          \lim_{|\pi| \to 0} S_{\pi}(f, X)=: \int_{t_{0}}^{t}f(s)dX_{s}
        \end{equation}
      \end{definition}
\end{enumerate}

We now turn to questions of existence of this limit for a rather general
family of stochastic processes $X_{t}$; see (i)-(iii) below.

\section{Statement of the Main Theorem}
\label{sec:3}

When the stochastic process $X$ is given, we proved in Theorem \ref{T:3.1} 
that the corresponding integrals live in a Hilbert space which is a direct 
sum of standard Lebesgue Hilbert spaces carrying the function $f(t)$. In 
the case of Brownian motion, we now show that the direct sum representation 
then only has one term. Yet the method from section \ref{sec:3} still offers 
a Fourier decomposition of the Wiener integration.

\begin{theorem}
\label{T:3.1}
  Let $(\Omega, \mathcal{S}, P)$ be given as above, and let $X$ be a 
  stochastic process satisfying conditions $(i)-(iii)$.  Let $f$ be given 
  and continous.
  \begin{enumerate}[(a) ]
    \item Then the stochastic integral $\int_{t_{0}}^{t}f(s)dX_{s}$ exists 
          and is in $L^{2}(\Omega, P)$.
    \item There is a family of bounded variation functions 
          $\varphi_{1}, \varphi_{2}, \cdots $ and numbers 
          $\lambda_{1}, \lambda_{2}, \cdots $ satisfying the following 
          conditions:
          \[
            \lambda_{1} \geq \lambda_{2} \geq \cdots > 0, \quad 
            \lambda_{k} \to 0
          \]
          in fact $\sum_{k}\lambda_{k} < \infty$,
         such that
         \begin{equation}
         \label{eq3.1}
           E(\left\vert \int_{t_{0}}^{t}f(s)dX_{s}\right\vert^{2})
           = \sum_{k=1}^{\infty}\lambda_{k} \left\vert 
           \int_{t_{0}}^{t}f(s)d\varphi_{k}(s)\right\vert^{2}
         \end{equation}
         where the terms $\int_{t_{0}}^{t}f(s)d\varphi_{k}(s)$ refer to 
         Stieltjes integration. 
    \item If an interval $J$ is chosen such that $t_{0}$ and $t \in J$, and 
          if $f$ has a weak derivative $f'$ in $L^{2}(J)$, then the following 
          estimate holds for the RHS in (\ref{eq3.1}):
          \begin{equation}
          \label{eq3.2}
            \mbox{ RHS } \leq \mbox{ ``Const" } + 
            \lambda_{1}\int_{t_0}^{t}|f'(s)|^{2}ds
          \end{equation}
          where ``Const" depends on certain boundary conditions, and where 
          $\lambda_{1}$ is the maximal eigenvaluel see (\ref{eq5.4}) below.
  \end{enumerate}
\end{theorem}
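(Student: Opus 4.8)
The plan is to realize the three parts of the theorem as consequences of a single spectral-theoretic device: the covariance kernel $K(s,t) := E(X_s X_t)$ defines, via the restricting assumptions (i)--(iii), a positive trace-class integral operator on a suitable Hilbert space built on $J$, and its Karhunen--Lo\`{e}ve diagonalization produces the eigenpairs $(\lambda_k, \varphi_k)$ that appear in (\ref{eq3.1}). Concretely, for a partition $\pi$ as in (\ref{eq2.2}) one computes directly from bilinearity of $E(\cdot)$ that
\begin{equation}
\label{eq:plan1}
E\bigl(|S_\pi(f,X)|^2\bigr) = \sum_{i,j=0}^{n-1} f(t_i)\cj{f(t_j)}\,\bigl(K(t_{i+1},t_{j+1}) - K(t_{i+1},t_j) - K(t_i,t_{j+1}) + K(t_i,t_j)\bigr),
\end{equation}
i.e. a double Riemann--Stieltjes sum for the mixed second difference of $K$. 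The first task is to show this sum converges as $|\pi|\to 0$; this is where assumptions (ii) (joint continuity) and (iii) (bounded variation in each slot) do the work — they guarantee the Stieltjes integrals $\int_{t_0}^t f(s)\,d_s\,d_t K(s,t)$ make sense and that the sums $S_\pi(f,X)$ form a Cauchy net in $L^2(\Omega,P)$. Completeness of $L^2(\Omega,P)$ then gives part (a).

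For part (b) I would invoke the result of \cite{JS07, JS08} cited in the introduction: the operator $T_K$ with kernel $K$ is positive, self-adjoint and trace-class, so by the spectral theorem it diagonalizes as $K(s,t) = \sum_k \lambda_k \varphi_k(s)\cj{\varphi_k(t)}$ with $\lambda_1 \geq \lambda_2 \geq \cdots > 0$, $\sum_k \lambda_k = \Trace(T_K) < \infty$, and the $\varphi_k$ inheriting bounded variation from the regularity of $K$. Substituting this expansion into the limit of (\ref{eq:plan1}) and interchanging the (now convergent) $k$-sum with the Stieltjes limit yields
\begin{equation}
\label{eq:plan2}
E\Bigl(\Bigl|\int_{t_0}^t f(s)\,dX_s\Bigr|^2\Bigr) = \sum_{k=1}^\infty \lambda_k \Bigl|\int_{t_0}^t f(s)\,d\varphi_k(s)\Bigr|^2,
\end{equation}
which is exactly (\ref{eq3.1}). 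The justification of the interchange is the technical heart of this part: one needs the trace-class bound $\sum_k \lambda_k < \infty$ together with a uniform estimate on $|\int_{t_0}^t f\,d\varphi_k|$ (from boundedness of $f$ on $J$ and uniformly bounded total variation of the $\varphi_k$) to apply dominated convergence for series.

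For part (c), assuming $t_0, t \in J$ and $f' \in L^2(J)$ in the weak sense, I would integrate each term $\int_{t_0}^t f(s)\,d\varphi_k(s)$ by parts to get $f(t)\varphi_k(t) - f(t_0)\varphi_k(t_0) - \int_{t_0}^t f'(s)\varphi_k(s)\,ds$. Applying Cauchy--Schwarz to the last integral, using $\|\varphi_k\|_{L^2(J)} = 1$ (normalized eigenfunctions), and collecting the boundary terms into the quantity called ``Const'', one bounds $|\int_{t_0}^t f\,d\varphi_k|^2 \leq \text{(boundary contribution)} + \|f'\|_{L^2(J)}^2$ up to the appropriate constants; summing against $\lambda_k$ and using $\lambda_k \leq \lambda_1$ in the $f'$ term while $\sum_k \lambda_k < \infty$ controls the boundary part gives (\ref{eq3.2}). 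The main obstacle I anticipate is not any single estimate but the bookkeeping around the two interchanges of limits — the passage from the discrete double sum (\ref{eq:plan1}) to the double Stieltjes integral, and then the swap of that limit with the spectral sum — since both require the regularity hypotheses (i)--(iii) to be used in a quantitatively uniform way across the partition refinement and across the index $k$ simultaneously; getting a single dominating bound that survives both limits is the delicate point.
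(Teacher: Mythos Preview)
Your proposal follows the same spectral route as the paper: diagonalize the covariance operator $T_J$ with kernel $K(s,t)=E(X_sX_t)$, extract eigenpairs $(\lambda_k,\varphi_k)$, note from (iii) and the eigenvalue equation that each $\varphi_k$ inherits bounded variation, and invoke Mercer for $\sum_k\lambda_k<\infty$. The one structural difference is that the paper works one level up from the kernel: rather than substituting a Mercer expansion of $K$ into your double second-difference sum for $E(|S_\pi|^2)$, it uses the Karhunen--Lo\`{e}ve expansion of the process itself, $X_t=\sum_k\sqrt{\lambda_k}\,\varphi_k(t)Z_k$ in $L^2(J)\otimes L^2(\Omega)$, with $Z_k:=\lambda_k^{-1/2}\int_J\overline{\varphi_k}\,X_t\,dt$ orthonormal in $L^2(\Omega)$. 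Substituting this into $S_\pi(f,X)$ and using orthogonality of the $Z_k$'s gives $E(|S_\pi|^2)=\sum_k\lambda_k\bigl|\sum_i f(t_i)\,\Delta_i\varphi_k\bigr|^2$ directly, so the interchange of the spectral sum with the partition limit is carried out for a single sum rather than for the double Stieltjes sum you describe. The payoff of the paper's route is that it exhibits the stochastic integral as an actual element $\sum_k\sqrt{\lambda_k}\bigl(\int f\,d\varphi_k\bigr)Z_k$ of $L^2(\Omega)$, not merely its second moment.

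There is a genuine slip in your part (c). After integrating by parts you apply Cauchy--Schwarz termwise, $\bigl|\int_{t_0}^t \varphi_k f'\bigr|^2\leq\|f'\|_{L^2}^2$, and then propose to ``use $\lambda_k\leq\lambda_1$ in the $f'$ term.'' But once each term has been bounded by the $k$-independent constant $\|f'\|^2$, the inequality $\lambda_k\leq\lambda_1$ buys nothing: summing gives at best $\bigl(\sum_k\lambda_k\bigr)\|f'\|^2$, with the trace of $T_J$ in place of $\lambda_1$, which is not (\ref{eq3.2}). The paper's argument pulls out $\lambda_1$ \emph{first} and then applies Bessel/Parseval for the orthonormal system $(\varphi_k)$ in $L^2(J)$:
\[
\sum_k\lambda_k\Bigl|\int_{t_0}^t\varphi_k f'\,ds\Bigr|^2\;\leq\;\lambda_1\sum_k\bigl|\langle\varphi_k,\chi_{[t_0,t]}f'\rangle_{L^2(J)}\bigr|^2\;\leq\;\lambda_1\int_{t_0}^t|f'|^2\,ds.
\]
It is Parseval on the orthonormal family, not Cauchy--Schwarz applied term by term, that produces the sharp coefficient $\lambda_1$.
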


In the next corollary, we stay with the assumptions from the theorem; 
in particular $X_{t}$ is a stochastic process subject to conditions 
(i)-(iii), and a compact interval $J$ is fixed.

\begin{corollary}
\label{C:3.2}
Covariance relations:
\begin{itemize}
  \item $E(X_{s} X_{t})
   =\sum_{k=1}^{\infty}\lambda_{k}\overline{\varphi_{k}(s)}\varphi_{k}(t)$
  \item $E(X_{t}^{2}) =\sum_{k=1}^{\infty}\lambda_{k}|\varphi_{k}(t)|^{2}$
  \item Dependency of increments:  If $s<t<u$ in $J$, then
        \begin{align*}  
          &E((X_{t}-X_{s})(X_{u}-X_{t}))  \\
          &= \sum_{k=1}^{\infty}\lambda_{k}
          (\overline{\varphi_{k}(t)}\varphi_{k}(u)
          -\overline{\varphi_{k}(s)}\varphi_{k}(u) + 
          \overline{\varphi_{k}(s)}\varphi_{k}(t)-|\varphi_{k}(t)|^{2})
        \end{align*} 
  \item $E((X_{t+\Delta t} - X_{t})^{2})=\sum_{k=1}^{\infty}\lambda_{k}
    |\varphi_{k}(t+\Delta t)-\varphi_{k}(t)|^{2}$
\end{itemize}
\end{corollary}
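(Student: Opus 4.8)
The plan is to obtain all four identities from Theorem~\ref{T:3.1}(b) together with the spectral (Mercer / Karhunen--Lo\`eve) expansion of the covariance kernel. The first item,
\[
  E(X_{s}X_{t}) = \sum_{k=1}^{\infty}\lambda_{k}\,\overline{\varphi_{k}(s)}\,\varphi_{k}(t),\qquad s,t\in J,
\]
is really the expansion that already underlies Theorem~\ref{T:3.1}: the $\lambda_{k}$ are the eigenvalues of the covariance operator with kernel $E(X_{s}X_{t})$ --- trace class by hypothesis (ii) and compactness of $J$ --- and the $\varphi_{k}$ are the associated bounded variation functions constructed in \cite{JS07, JS08}; continuity of $(s,t)\mapsto E(X_{s}X_{t})$ makes the series converge uniformly on $J\times J$. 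A self-contained derivation from Theorem~\ref{T:3.1}(b) is also available: polarize the identity of part (b) in the integrand to get, for continuous $f,g$,
\[
  E\Bigl(\overline{\int_{t_{0}}^{t} f\,dX}\ \int_{t_{0}}^{t} g\,dX\Bigr) = \sum_{k=1}^{\infty}\lambda_{k}\,\overline{\int_{t_{0}}^{t} f\,d\varphi_{k}}\ \int_{t_{0}}^{t} g\,d\varphi_{k},
\]
and then let $f,g$ run through uniformly bounded approximations of the indicators $\chi_{[t_{0},s]}$, using that $S_{\pi}(f_{n},X)\to X_{s}$ and $\int f_{n}\,d\varphi_{k}\to \varphi_{k}(s)$ (here one uses the normalization $\varphi_{k}(t_{0})=0$ carried by the construction, which forces $X_{t_{0}}=0$ a.s.).

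Granting the first item, the second is the diagonal case $s=t$. For the third, expand bilinearly,
\[
  (X_{t}-X_{s})(X_{u}-X_{t}) = X_{t}X_{u} - X_{s}X_{u} - X_{t}^{2} + X_{s}X_{t},
\]
apply $E$, substitute the first item into each of the four terms, and merge the four series into one; the merge is legitimate because each series converges absolutely, since for $a,b\in J$ one has by Cauchy--Schwarz and the second item $\sum_{k}\lambda_{k}\,|\varphi_{k}(a)|\,|\varphi_{k}(b)| \le \bigl(\sum_{k}\lambda_{k}|\varphi_{k}(a)|^{2}\bigr)^{1/2}\bigl(\sum_{k}\lambda_{k}|\varphi_{k}(b)|^{2}\bigr)^{1/2} < \infty$. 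Collecting terms yields exactly the integrand displayed in the statement. The fourth item is the same computation applied to $(X_{t+\Delta t}-X_{t})^{2} = X_{t+\Delta t}^{2} - 2X_{t+\Delta t}X_{t} + X_{t}^{2}$: substituting the first item and using that $-2E(X_{t+\Delta t}X_{t})$ contributes $-\sum_{k}\lambda_{k}\bigl(\overline{\varphi_{k}(t+\Delta t)}\varphi_{k}(t) + \varphi_{k}(t+\Delta t)\overline{\varphi_{k}(t)}\bigr)$ (the covariance being real), the term-by-term sum becomes $\sum_{k}\lambda_{k}|\varphi_{k}(t+\Delta t)-\varphi_{k}(t)|^{2}$.

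The only genuine content is the first item; everything after it is bilinear bookkeeping with an absolutely convergent series. If one takes the cited route, the work is entirely in \cite{JS07, JS08} --- trace-class covariance operator plus the spectral theorem --- and here one need only recall it. If one prefers the self-contained route through Theorem~\ref{T:3.1}(b), the main obstacle is the approximation step: passing from continuous integrands to the step functions $\chi_{[t_{0},s]}$ that recover $X_{s}$ from $\int f\,dX$ and $\varphi_{k}(s)$ from $\int f\,d\varphi_{k}$, which is where one must lean on the continuity and bounded-variation hypotheses (ii)--(iii) and on the a priori estimate of part (c) in order to interchange the limit with the infinite sum.
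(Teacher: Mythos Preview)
Your argument is correct and close in spirit to the paper's, but the paper takes a slightly more direct probabilistic route. Rather than invoking Mercer's theorem on the kernel $E(X_sX_t)$ or polarizing the identity of Theorem~\ref{T:3.1}(b), the paper simply substitutes the Karhunen--Lo\`eve expansion
\[
  X_t(\omega)=\sum_{k=1}^{\infty}\sqrt{\lambda_k}\,\varphi_k(t)\,Z_k(\omega)
\]
(equation~(\ref{eq5.8})) into each of the four expressions and then uses the $\omega$-domain orthogonality $E(\overline{Z_j}Z_k)=\delta_{j,k}$ (equation~(\ref{eq5.7})) to collapse the double sum. This is the ``double-orthogonality'' the paper refers to, together with the tensor-product structure $L^2(J\times\Omega)=L^2(J)\otimes L^2(\Omega)$. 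Your route~(a) via Mercer is of course equivalent---Mercer's expansion of the kernel is exactly what one gets by taking $E(\overline{X_s}X_t)$ in the KL expansion---so the difference is only one of emphasis: you work on the kernel side, the paper works on the random-variable side.

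Your route~(b), polarizing Theorem~\ref{T:3.1}(b) and then approximating indicators, is unnecessarily indirect, and your parenthetical normalization $\varphi_k(t_0)=0$ (forcing $X_{t_0}=0$ a.s.) is not part of the paper's setup; it holds in the Brownian example~(\ref{eq4.3}) but is not assumed in general. Once you have the first identity by either Mercer or the $Z_k$-expansion, your derivation of items two through four by bilinear expansion and absolute convergence is exactly right and matches what the paper intends.
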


\section{An Application}
\label{sec:4}

In this section we restrict the setting of Theorem \ref{T:3.1} to the 
special case when  $X = B$, i.e., to the special case of integration 
with respect to Brownian motion. We then work out the eigenfunctions 
and eigenvalues for the covariance operator. It turns out to be the 
familiar Fourier basis. Actually there is a choice of bases depending on 
boundary conditions. A choice of the Dirichlet conditions yields the ONB 
of the sine functions. We further show that when the eigenvalue expansion 
is summed (using orthogonality) we then arrive at the familiar Wiener-Ito 
formula.

\begin{example}
\label{E:4.1}
  $X=B=$ Brownian motion.
  \begin{itemize}
    \item $(\Omega, \mathcal{S}, P)$ Gaussian space;
    \item $\Omega$ a space of functions, 
          $\mathcal{S}:=\langle \text{cylinder sets}\rangle$ 
          $\sigma-$ algebra; the sigma algebra generated by the cylinder-sets.
    \item $J=[0,1]$;
    \item $E(X_{s} X_{t})=\min(s, t)=:s \wedge t$;
    \item $X_{t}(\omega)=\omega(t)$, $\omega \in \Omega$; 
    \item $E((X_{t+\Delta t}-X_{t})^{2})=\Delta t$.
  \end{itemize}
  We now show that the known formula
  \begin{equation}
  \label{eq4.1}
    E(\left\vert \int_{0}^{t}f(s)dB_{s}\right\vert^{2})= 
    \int_{0}^{t}|f(s)|^{2}ds
  \end{equation}
  follows from the theorem; and in particular from (\ref{eq3.1}).
 
  In the case of Brownian motion for the functions $\varphi_{k}$ we may take 
  \begin{equation}
  \label{eq4.2}
    \varphi_{k}(t)=\sqrt{2}\sin(k\pi t); \quad k=1,2, \cdots . 
  \end{equation}
  Note that
  \begin{equation}
  \label{eq4.3}
    \varphi_{k}(0)=\varphi_{k}(1)=0, \quad \forall k=1,2, \cdots ;
  \end{equation}
  and 
  \begin{equation}
  \label{eq4.4}
    \lambda_{k}=\frac{1}{(k\pi)^{2}}
  \end{equation}

  Set $t_{0}=0$ for simplicity.  Note that if 
  \begin{equation}
  \label{eq4.5}
    g(t):= \int_{0}^{1} t \wedge s f(s)ds,
  \end{equation}
  then 
  \[
    (\frac{d}{dt})^{2}g(t)=-f(t),
  \]
  so the eigenvalue problem
  \begin{equation}
  \label{eq4.6}
    \int_{0}^{1}t \wedge s f(s)ds=\lambda f(t)
  \end{equation}
  has the solution given by (\ref{eq4.2}) and (\ref{eq4.4}).  Note further that 
  (\ref{eq4.3}) is a choice of boundary conditions.

  To see that (\ref{eq4.1}) follows from (\ref{eq3.1}) in the theorem, we 
  proceed as follows; starting with the RHS in (\ref{eq3.1}) and using 
  $d(\sin(k\pi t)) = -k\pi \cos(k\pi t)dt$:
  \begin{align*}
    \sum_{k=1}^{\infty}\lambda_{k} \left\vert 
           \int_{0}^{t}f(s)d\varphi_{k}(s)\right\vert^{2} &=
    \sum_{k=1}^{\infty}\frac{2}{(k\pi)^{2}}\left\vert 
    \int_{0}^{t}f(s)d\sin(k\pi s)\right\vert^{2}  \\
    &\underset{by (\ref{eq4.3})}{=}  
      2\sum_{k=1}^{\infty}\left\vert \int_{0}^{t}f(s)\cos(k\pi s)ds
      \right\vert^{2}  \\
    &\underset{by Parseval's formula}{=} \int_{0}^{t}|f(s)|^{2}ds; 
  \end{align*}
  and the desired conclusion follows.
\end{example}

\section{Proof of Theorem \ref{T:3.1}}
\label{sec:5}

Here we give the details of proof of theorem \ref{T:3.1}. Since the proof is 
long, to help the reader our presentation is divided into two 
parts, A and B.

Part A is an outline of the steps in the proof itself, and part B contains 
the details arguments making up each part in the proof. Part A begins with 
the notation and the terminology, introducing an auxiliary selfadjoint 
operator, its matrix approximations, and its spectral resolution.

\subsection{Part A}
\label{partA}
\begin{itemize}
  \item Select a fixed interval $J:=[a,b]$, $a<b$.
  \item From the assumptions on the process $(X_{t})_{t \in \mathbb{R}}$ 
        note that the operator
        \begin{equation}
        \label{eq5.1}
          (T_{J}f)(t):= \int_{J}E(X_{t}X_{s})f(s)ds 
        \end{equation}
        is compact and selfadjoint in the Hilbert space $L^{2}(J)$.
  \item For every partition 
        \[
          \pi:t_{0} < t_{1} < \cdots < t_{n}, \quad t_{0}=a, t_{n}=b;
        \]
        the following matrix
        \begin{equation}
        \label{eq5.2}
          (M_{J, \pi})_{i,j}:=E(X_{t_{i}} X_{t_{j}})
        \end{equation}
        offers a discrete approximation for the operator $T_{j}$ in 
        (\ref{eq5.1}).
  \item Set
        \begin{equation}
        \label{eq5.3}
          \mathcal{H}(J):=L^{2}(J)\ominus \text{ker}(T_{J})=
          \{g \in L^{2}(J): \langle g,k\rangle_{L^{2}}=0, \forall k \in 
          \text{ker}(T_{J})\}.
        \end{equation}
        Then an application of the spectral theorem to $T_{J}$ yields the 
        following sequence of orthogonal eigenfunctions 
        $\varphi_{1}, \varphi_{2}, \cdots $ in $\mathcal{H}(J)$, and numbers 
        $\lambda_{1}, \lambda_{2}, \cdots \in \mathbb{R}_{+}$ such that 
        $\lambda_{k} \to 0$; 
        \[
          \lambda_{1} \geq \lambda_{2} \geq \cdots > 0, \quad 
          \lambda_{k} \to 0
        \]
        such that
        \begin{equation}
        \label{eq5.4}
          T_{J}\varphi_{k}=\lambda_{k}\varphi_{k} \quad k=1,2, \cdots
        \end{equation}
        orthogonality relations in the $t-$domain:
        \begin{equation}
        \label{eq5.5}
          \langle \varphi_{j}, \varphi_{k}\rangle_{L^{2}(J)} 
          = \int_{J}\overline{\varphi_{j}}{\varphi_{k}}dm = \delta_{j,k};
        \end{equation}
        and the closed span of $\{\varphi_{k}\}$ is $\mathcal{H}(J)$.
  \item Set 
        \begin{equation}
        \label{eq5.6}
          Z_{k}(\cdot):=\frac{1}{\sqrt{\lambda_{k}}} 
          \int_{J}\overline{\phi_{k}(t)}X_{t}(\cdot)dt,
        \end{equation}
        and note that each $Z_{k}$, $k=1,2, \cdots $ is a random variable,
        \[
          Z_{k} \in L^{2}(\Omega, \mathcal{S}, P).
        \]
        Moreover, a calculation yields (orthogonality relations in the 
        $\omega-$domain:
        \begin{equation}
        \label{eq5.7}
          E(\overline{Z_{j}}Z_{k})=\delta_{j,k}
        \end{equation}
  \item Aside; note that if $(X_{t})$ is assumed Gaussian, then each $Z_{k}$, 
        $k=1,2, \cdots$ is Gaussian as well.
  \item Karhunen-Lo\`{e}ve, or Generalized Fourier Expansion: \\
        In $L^{2}(J \times \Omega, m \times P)$, we have the following 
        pointwise a. e. representation
        \begin{equation}
        \label{eq5.8}
          X(t, \omega)=
          \sum_{k=1}^{\infty}\sqrt{\lambda_{k}}\varphi_{k}(t)Z_{k}(\omega), 
        \end{equation}
        as well as
        \begin{equation}
        \label{eq5.9}
          \lim_{N\to \infty} \left\Vert X(\cdot, \cdot) - 
          \sum_{k=1}^{N}\sqrt{\lambda_{k}}\varphi_{k}(\cdot)Z_{k}(\cdot)
          \right\Vert_{L^{2}(m \times P)} =0.
        \end{equation}

\end{itemize}

\subsection{Part B}
\label{partB}

We now turn to the details of the proof of (\ref{eq2.4}) and (\ref{eq3.1}).

Writing out equation (\ref{eq5.4}), we get
\begin{equation}
\label{eq5.10}
  \int_{J}E(X_{t}X_{s})\varphi_{k}(s)ds=\lambda_{k}\varphi_{k}(t);
\end{equation}
and so from the assumptions (i)-(iii) and eq. (\ref{eq2.1}) we conclude that 
each of the eigenfunctions $\varphi_{k}(\cdot)$ is continuous and of bounded 
variation.

This means that whenever $t_{0}, t \in J$, i.e., $a\leq t_{0}<t\leq b$, the 
expression
\begin{equation}
\label{eq5.11}
  \int_{t_{0}}^{t}f(s)d\varphi_{k}(s)
\end{equation}
is a well defined Stieltjes integral.  Morever, if $f$ is assumed of bounded 
variation, 
\begin{equation}
\label{eq5.12}
  \int_{t_{0}}^{t}fd\varphi_{k}=[f\varphi_{k}]_{t_{0}}^{t}-
  \int_{t_{0}}^{t}\varphi_{k}(s)f'(s)ds.
\end{equation}

We now turn to the approximation (\ref{eq2.4}) from the theorem, and we use the 
Karhunen-Lo\`{e}ve expansion (\ref{eq5.8}) in the computation of 
\begin{equation}
\label{eq5.13}
  \Delta X_{t_{i}}:=X_{t_{i+1}}-X_{t_{i}}
\end{equation}
for a fixed (chosen) partition $\pi$ as specified in (\ref{eq2.2}).

Using condition (3)(ii) in the statement of the theorem, we note that for 
fixed $J$, the operator $T_{j}$ in (\ref{eq5.1}) is trace-class.  From 
operator theory (Mercer's theorem), we know that 
\begin{equation}
\label{eq5.14}
  \text{trace}(T_{J})=\int_{J}E(X_{t}^{2})dt
  =\sum_{k=1}^{\infty}\lambda_{k} < \infty
\end{equation}
i.e., integration in $(\ref{eq5.1})$ over the diagonal $s=t$.  And so in 
particular, finiteness of $\sum_{k=1}^{\infty}\lambda_{k}$ follows.

In the study of the operator $T_{J}$ from (\ref{eq5.1}) we make use of 
tools from Hilbert space theory of integral operators. In particular, in 
the estimate (\ref{eq5.14}) we use Mercer's theorem. However in applications 
to covariance kernels (\ref{eq2.1}) one often has stronger properties. It is 
known that if the kernel in (\ref{eq2.1}) is Lipschitz of degree $\gamma$ 
with  $\gamma > \frac{1}{2}$  in one of the two variables (with the 
other fixed), then the operator  $T_{J}$ in (\ref{eq5.1}) will 
automatically be nuclear. For the literature on this we refer to  
\cite{Do93, Ku83,  LL52,  St58}.  We further note that this Lipschitz 
condition is indeed satisfied for the covariance kernel of fractional 
Brownian motion, see e.g., \cite{IA07}.

Set $(\Delta \varphi_{k})_{t_{i}}:=\varphi_{k}(t+\Delta t)-\varphi_{k}(t)$.
Using the Hilbert space $L^{2}(J \times \Omega, m \times P)$ and its 
tensor-product representation, 
$L^{2}(J)\otimes L^{2}(\Omega, \mathcal{S}, P)$, we get 
\begin{align*}
  \sum_{t=0}^{n-1}f(t_{i})\Delta X_{t_{i}}(\omega) 
  &\underset{\text{by } (\ref{eq5.8})}{=} 
  \sum_{i=0}^{n-1}\sum_{k=1}^{\infty}f(t_{i})\sqrt{\lambda_{k}}
  (\varphi_{k}(t_{i+1})-\varphi_{k}(t_{i}))Z_{k}(\omega) \\
  &= \sum_{k=1}^{\infty}(\sum_{i=0}^{n-1}f(t_{i})(\Delta \varphi_{k})_{t_{i}})
     \sqrt{\lambda_{k}}Z_{k}(\omega)\\
\end{align*}
and therefore
\begin{equation}
\label{eq5.15}
  E(|S_{\pi}(f, X)|^{2}) \underset{\text{by } (\ref{eq5.7})}{=}
  \sum_{k=1}^{\infty}\lambda_{k}\left\vert 
  \sum_{i=0}^{n-1} f(t_{i})(\Delta \varphi_{k})_{t_{i}} \right\vert^{2}.
\end{equation}
Since $f$ is assumed contions, and each $\varphi_{k}$ is of bounded variation, 
the following convergence holds:
\begin{equation}
\label{eq5.16}
  \lim_{|\pi|\to 0} \sum_{i=0}^{n-1}f(t_{i})(\Delta \varphi_{k})_{t_{i}} 
  = \int_{t_{0}}^{t}fd\varphi_{k}.
\end{equation}

Now if the function $f$ is satisfying $f \in L^{2}(J)$ and $f' \in L^{2}(J)$, 
then we get the following estimate, relying on the boundary representation 
(\ref{eq5.12}) and Parseval, see also (\ref{eq5.5}): For the RHS 
in (\ref{eq5.15}) 
we have; after passing to the limit:
\begin{align*}
  \sum_{k=1}^{\infty}\lambda_{k} \left\vert 
  \int_{t_{0}}^{t}f(s)d\varphi_{k}(s)\right\vert^{2} 
  &\underset{\text{by } (\ref{eq5.4})}{=}
  \text{(boundary terms)} + \sum_{k=1}^{\infty}\lambda_{k} \left\vert 
  \int_{t_{0}}^{t}\varphi_{k}(s)f'(s)ds\right\vert^{2}  \\
  &\leq \text{(boundary terms)} + \lambda_{1}\sum_{k=1}^{\infty} \left\vert 
  \int_{t_{0}}^{t}\varphi_{k}f'ds\right\vert^{2}  \\
  &\underset{\text{by } (\ref{eq5.5}) \text{ and Parseval}}{\leq}
  \text{(boundary terms)} + \lambda_{1} \int_{t_{0}}^{t}|f'(s)|^{2}ds;
\end{align*}
which is the desired conclustion in part (c) of the theorem.

\begin{proof}
Of Corollary \ref{C:3.2}: The essential point is formular (\ref{eq5.8}).  
However, in substitution of the expression on the RHS in (\ref{eq5.8}) we 
make use of double-orthogonality, i.e., (\ref{eq5.5}) and (\ref{eq5.7}). 
Specifically, we have the tensor product representation
$L^{2}(J \times \Omega, m \times P)=L^{2}(J) \otimes L^{2}(\Omega)$,
and so $X=\sum_{k=1}^{\infty}\sqrt{\lambda_{k}}\varphi_{k} \otimes Z_{k}$
in (\ref{eq5.8}) refers to the tensor representation.
\end{proof}

\section{Entropy: Optimal Bases}
\label{sec:6}
In this section we compare the choice of ONB from section \ref{sec:3} with 
alternative choices of ONBs. The application of Karhunen-Lo\`{e}ve dictates 
a particular choice of ONB.  

Historically, the Karhunen-Lo\`{e}ve arose as a tool from the interface of 
probability theory and information theory; see details with references 
inside the paper. It has served as a powerful tool in a variety of 
applications; starting with the problem of separating variables in 
stochastic processes, say $X_{t}$; processes that arise from statistical 
noise, for example from fractional Brownian motion. Since the initial 
inception in mathematical statistics, the operator algebraic contents of 
the arguments have crystallized as follows: starting from the process 
$X_{t}$,  for simplicity assume zero mean, i.e., $E(X_{t}) = 0$; create a 
correlation matrix  $T_{J}(s,t) = E(X_{s} X_{t})$. (Strictly speaking, it is 
not a matrix, but rather an integral kernel. Nonetheless, the matrix 
terminology has stuck.) The next key analytic step in the Karhunen-Lo\`{e}ve 
method is to then apply the Spectral Theorem from operator theory to a 
corresponding selfadjoint operator, or to some operator naturally associated 
with the integral kernel: Hence the name, the Karhunen-Lo\`{e}ve 
Decomposition (KLC). In 
favorable cases (discrete spectrum), an orthogonal family of functions 
$(\varphi_{n}(t))$ in the time variable arise, and a corresponding family of 
eigenvalues. We take them to be normalized in a suitably chosen 
square-norm. By integrating the basis functions $\varphi_{n}(t)$ against 
$X_{t}$, we get a sequence of random variables $Z_{n}$. It was the insight of 
Karhunen-Lo\`{e}ve \cite{Loe52} to give general conditions for when this 
sequene of random variables is independent, and to show that if the 
initial random process $X_{t}$ is Gaussian, then so are the random 
variables $Z_{n}$. 

Below, we take advantage of the fact that Hilbert space and operator 
theory form the common language of both quantum mechanics and of signal/image 
processing. Recall first that in quantum mechanics, (pure) states as 
mathematical entities ``are" one-dimensional subspaces in complex Hilbert 
space $\mathcal{H}$, so we may represent them by vectors of norm one. 
Observables ``are" selfadjoint operators in $\mathcal{H}$, and the 
measurement problem entails von Neumann's spectral theorem applied to the 
operators.

In signal processing, time-series, or matrices of pixel 
numbers may similarly be realized by vectors in Hilbert space 
$\mathcal{H}$. The probability distribution of quantum mechanical 
observables (state space $\mathcal{H}$) may be represented by choices of 
orthonormal bases (ONBs) in $\mathcal{H}$ in the usual way (see e.g., 
\cite{Jor06}).In the 1940s, Kari Karhunen (\cite{Kar46}, \cite{Kar52}) 
pioneered the use of spectral theoretic methods in the analysis of time 
series, and more generally in stochastic processes. It was followed up by 
papers and books by Michel Lo{\`e}ve in the 1950s \cite{Loe52}, and in 1965 
by R.B. Ash \cite{Ash90}. (Note that this theory precedes the surge in the 
interest in wavelet bases!)

Parallel problems in quantum mechanics and in signal processing 
entail the choice of ``good" orthonormal bases (ONBs). One particular such ONB 
goes under the name ``the Karhunen-Lo\`{e}ve basis." We will show that it is 
optimal.

\begin{definition}
\label{D:6.1}
Let $\mathcal{H}$ be a Hilbert space.
Let $(\psi_{i})$ and $(\varphi_{i})$ be orthonormal bases (ONB).
If $(\psi_{i})_{i \in I}$ is an ONB, we set $Q_{n}:=$ the orthogonal projection 
onto $span\{\psi_{1}, ... , \psi_{n}\}$.
\end{definition}

We now introduce a few facts about operators which will be needed in the paper. In particular we recall Dirac's terminology \cite{Dir47} for rank-one operators in Hilbert space. While there are alternative notation available, Dirac's bra-ket terminology is especially efficient for our present considerations.
\begin{definition}
\label{D:6.2}
Let vectors $u$, $v \in \mathcal{H}$.  Then  
\begin{equation}
\label{E:dirac1}
  \langle u | v \rangle = \text{inner product} \in \mathbb{C},
\end{equation}

\begin{equation}
\label{E:dirac2}
  |u \rangle \langle v| = \text{rank-one operator}, 
  \mathcal{H} \to \mathcal{H},
\end{equation}
 where the operator $|u \rangle \langle v|$ acts as follows

\begin{equation}
\label{E:dirac3}
|u \rangle \langle v|w = |u\rangle \langle v|w \rangle 
  = \langle v | w \rangle u, \quad \textit{for all } w \in \mathcal{H}.
\end{equation}
\end{definition}

Dirac's bra-ket and ket-bra notation is is popular in physics, and it is 
especially convenient in working with rank-one operators and inner products. 
For example, in the middle term in eq (\ref{E:dirac3}), the vector $u$ is 
multiplied by a scalar, the inner product; and the inner product comes about 
by just merging the two vectors.

\begin{definition}
\label{D:6.3}
If $S$ and $T$ are bounded operators in $\mathcal{H}$, in $B(\mathcal{H})$, 
then 
\begin{equation}
\label{E:fact3}
S|u \rangle \langle v|T = |Su \rangle \langle T^{*}v|
\end{equation}

If $(\psi_{i})_{i \in \mathbf{N}}$ is an ONB then the projection 
\[
  Q_{n}:= \text{proj span}\{\psi_{1}, ... , \psi_{n}\}
\]
is given by
\begin{equation}
\label{E:fact4}
  Q_{n} = \sum_{i=1}^{n}|\psi_{i} \rangle \langle \psi_{i}|; 
\end{equation}
and for each $i$, $|\psi_{i} \rangle \langle \psi_{i}|$ is the projection onto 
the one-dimensional subspace $\mathbf{C} \psi_{i} \subset \mathcal{H}$.
\end{definition}

\begin{definition}
\label{D:6.4}
Suppose $X_{t}$ is a stochastic process indexed by $t$ in a finite interval 
$J$, and taking values in $L^{2}(\Omega, P)$ for some probability space 
$(\Omega, P)$. Assume the normalization $E(X_{t})=0$. Suppose the integral 
kernel $E(X_{t} X_{s})$ can be 
diagonalized, i.e., suppose that 
\[
  \int_{J}{E(X_{t} X_{s})\varphi_{k}(s)}ds=\lambda_{k}\varphi_{k}(t)
\]
with an ONB $(\varphi_{k})$ in $L^{2}(J)$.  If $E(X_{t})=0$ then
\[
  X_{t}(\omega)=\sum_{k}\sqrt{\lambda_{k}}\phi_{k}(t)Z_{k}(\omega), 
  \quad \omega \in \Omega
\]
where $E(Z_{j} Z_{k})=\delta_{j,k}$, and $E(Z_{k})=0$.
The ONB $(\varphi_{k})$ is called the \textit{KL-basis} with respect to the 
stochastic processes $\{X_{t}: t \in J \}$.
\end{definition}

\begin{theorem}
\label{T:6.1}
(See \cite{JS07}) The Karhunen-Lo\`{e}ve ONB gives 
the smallest error terms in the approximation to a frame operator.
\end{theorem}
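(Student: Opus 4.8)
I read the statement as the classical optimality of the Karhunen-Lo\`{e}ve expansion: among all ONBs $(\psi_{i})$ of $L^{2}(J)$ (equivalently of $\mathcal{H}(J)$), the truncation projections $Q_{n}$ of Definition \ref{D:6.1} minimize the mean-square approximation error to the process, namely
\[
  \varepsilon_{n}(\psi):=E\big(\|X-Q_{n}X\|_{L^{2}(J)}^{2}\big)=\text{trace}\big((I-Q_{n})T_{J}\big),
\]
where $T_{J}$ is the covariance operator of (\ref{eq5.1}) --- the ``frame operator'' of the statement, since $T_{J}=\sum_{k}|\sqrt{\lambda_{k}}\varphi_{k}\rangle\langle\sqrt{\lambda_{k}}\varphi_{k}|$ --- and $X$ is viewed as the $L^{2}(J)$-valued random variable $\omega\mapsto X(\cdot,\omega)$. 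The displayed identity uses $T_{J}=E\big(|X\rangle\langle X|\big)$, $Q_{n}=Q_{n}^{*}=Q_{n}^{2}$, and cyclicity of the trace. Since $\text{trace}(T_{J})=\sum_{k}\lambda_{k}<\infty$ by (\ref{eq5.14}) and is independent of the ONB, minimizing $\varepsilon_{n}(\psi)$ is the same as maximizing $\text{trace}(Q_{n}T_{J})=\sum_{i=1}^{n}\langle\psi_{i},T_{J}\psi_{i}\rangle$, so the theorem reduces to a Ky Fan type extremal principle for the positive trace-class operator $T_{J}$.

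To prove that principle I would expand each diagonal entry in the KL eigenbasis of (\ref{eq5.4}): writing $a_{ik}:=|\langle\psi_{i},\varphi_{k}\rangle|^{2}\ge 0$ we get $\langle\psi_{i},T_{J}\psi_{i}\rangle=\sum_{k}\lambda_{k}a_{ik}$ (absolutely convergent since $\sum_{k}\lambda_{k}<\infty$), hence $\text{trace}(Q_{n}T_{J})=\sum_{k}\lambda_{k}c_{k}$ with $c_{k}:=\sum_{i=1}^{n}a_{ik}$. Two constraints on $(c_{k})$ are immediate: $0\le c_{k}\le\|\varphi_{k}\|^{2}=1$ (it equals $\|Q_{n}\varphi_{k}\|^{2}$ and $Q_{n}$ has rank $n$), and, by Bessel's inequality for the ONB $(\varphi_{k})$, $\sum_{k}c_{k}=\sum_{i=1}^{n}\sum_{k}|\langle\psi_{i},\varphi_{k}\rangle|^{2}\le\sum_{i=1}^{n}\|\psi_{i}\|^{2}=n$. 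Over the polytope $\{(c_{k}):0\le c_{k}\le 1,\ \sum_{k}c_{k}\le n\}$ the linear functional $\sum_{k}\lambda_{k}c_{k}$ is maximized, since $\lambda_{1}\ge\lambda_{2}\ge\cdots>0$, at $c_{1}=\cdots=c_{n}=1$ and $c_{k}=0$ for $k>n$; concretely, $\sum_{k}\lambda_{k}c_{k}-\sum_{k=1}^{n}\lambda_{k}=-\sum_{k\le n}\lambda_{k}(1-c_{k})+\sum_{k>n}\lambda_{k}c_{k}\le\lambda_{n}\big(\sum_{k>n}c_{k}-\sum_{k\le n}(1-c_{k})\big)=\lambda_{n}\big(\sum_{k}c_{k}-n\big)\le 0$. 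Hence $\text{trace}(Q_{n}T_{J})\le\sum_{k=1}^{n}\lambda_{k}$, i.e., $\varepsilon_{n}(\psi)\ge\sum_{k>n}\lambda_{k}$ for every ONB.

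Finally I would verify that the KL basis attains the bound: for $\psi_{i}=\varphi_{i}$ we have $a_{ik}=\delta_{ik}$, so $c_{k}=1$ for $k\le n$ and $0$ otherwise, and $\varepsilon_{n}(\varphi)=\sum_{k>n}\lambda_{k}$ exactly; chasing the equality conditions in the estimate above shows further that (modulo the usual ambiguity when $\lambda_{n}=\lambda_{n+1}$) any optimal ONB must satisfy $\text{span}\{\psi_{1},\dots,\psi_{n}\}=\text{span}\{\varphi_{1},\dots,\varphi_{n}\}$, the top-$n$ eigenspace of $T_{J}$. I do not anticipate a conceptual obstacle --- the argument is the standard proof of the Ky Fan / Schmidt--Mirsky principle --- but two points need care: (a) pinning down precisely what ``frame operator'' and ``error terms'' mean in the one-line statement, for which the reading above (consistent with Sections \ref{sec:3}--\ref{sec:5} and with \cite{JS07}) appears to be the intended one, an equivalent formulation being that $\sum_{k\le n}\lambda_{k}|\varphi_{k}\rangle\langle\varphi_{k}|$ is the best rank-$n$ approximation to $T_{J}$ in every unitarily invariant norm; and (b) justifying the interchange of the double sum $\sum_{i}\sum_{k}\lambda_{k}a_{ik}$ and the termwise use of Bessel, which is legitimate by Tonelli since all quantities are $\ge 0$ and $\sum_{k}\lambda_{k}<\infty$.
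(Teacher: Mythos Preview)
Your proposal is correct and arrives at exactly the inequality the paper uses, namely $\operatorname{tr}(Q_{n}^{\psi}T_{J})\le\sum_{k=1}^{n}\lambda_{k}$, together with the observation that this is equivalent to the error bound $E_{n}^{\varphi}\le E_{n}^{\psi}$. The route, however, is different. The paper does not prove the trace inequality from scratch: after recording the min--max description (\ref{E:lambda1})--(\ref{E:lambdak+1}) of the eigenvalues, it simply invokes the Arveson--Kadison theorem \cite{ArKa06} on diagonals of selfadjoint operators to obtain (\ref{E:ineq}), and then notes the equivalence with (\ref{E:error_ineq}). You instead supply a self-contained Ky~Fan argument via the coefficients $c_{k}=\|Q_{n}\varphi_{k}\|^{2}$ and the polytope constraints $0\le c_{k}\le 1$, $\sum_{k}c_{k}\le n$. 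What your approach buys is that it is elementary and requires no external black box; what the paper's approach buys is brevity and a direct link to the majorization framework of \cite{ArKa06}, which it reuses in the very next theorem on entropy (Theorem~\ref{T:6.2}), where the convex-function inequality (\ref{E:beta_ineq}) again comes for free from that same reference. Your caveat~(a) about interpreting the one-line statement is well taken; the paper's own reading is precisely yours, with $E_{n}^{\psi}=\operatorname{tr}\big((I-Q_{n}^{\psi})T_{J}\big)$.
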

\begin{proof}
Given the operator $T_{J}$ which is trace class and positive 
semidefinite, we may apply the spectral theorem to it. What results is 
a discrete spectrum, with the
natural order $\lambda_{1} \geq \lambda_{2} \geq ...$ and a corresponding ONB 
$(\varphi_{k})$ consisting of eigenvectors, i.e., 
\begin{equation}
\label{E:eigenvector}
T_{J}\varphi_{k}=\lambda_{k}\varphi_{k}, k \in \mathbb{N}
\end{equation}
called the Karhunen-Lo\`{e}ve data.  The spectral data may be constructed 
recursively starting with 
\begin{equation}
\label{E:lambda1}
\lambda_{1}=\underset{\varphi \in \mathcal{H}, \|\varphi\|=1}{sup}
\langle \varphi| T_{J}\varphi \rangle 
= \langle \varphi_{1}| T_{J}\varphi_{1} \rangle
\end{equation}
and
\begin{equation}
\label{E:lambdak+1}
\lambda_{k+1}=\underset{\underset{\varphi \bot \varphi_{1}, \varphi_{2}, ..., \varphi_{k}}
{\varphi \in \mathcal{H}, \|\varphi\|=1}}{sup}
\langle \varphi| T_{J}\varphi \rangle 
= \langle \varphi_{k+1}| T_{J}\varphi_{k+1} \rangle
\end{equation}
Now an application of \cite{ArKa06}; Theorem 4.1 yields 
\begin{equation}
\label{E:ineq}
\sum_{k=1}^{n}\lambda_{k} \geq \text{tr}(Q_{n}^{\psi}T_{J}) 
  = \sum_{k=1}^{n}\langle \psi_{k}| T_{J}\psi_{k} \rangle \quad \text{for all }n,
\end{equation} 
where $Q_{n}^{\psi}$ is the sequence of projections from (\ref{E:fact4}), 
deriving from some ONB $(\psi_{i})$ and arranged such that
\begin{equation}
\label{E:psi_ineq}
\langle \psi_{1} | T_{J}\psi_{1} \rangle \geq \langle \psi_{2} | T_{J}\psi_{2} \rangle 
\geq ... \quad \text{.} 
\end{equation}
Hence we are comparing ordered sequences of eigenvalues with sequences of 
diagonal matrix entries.

Finally, we have
\[
\text{tr }(T_{J})=\sum_{k=1}^{\infty} \lambda_{k} 
=\sum_{k=1}^{\infty} \langle \psi_{k} | T_{J}\psi_{k} \rangle < \infty.
\]

The assertion in Theorem \ref{T:6.1} is the validity of
\begin{equation}
\label{E:error_ineq}
E_{n}^{\varphi} \leq E_{n}^{\psi}
\end{equation}
for all $(\psi_{i}) \in ONB(\mathcal{H})$, and all $n=1,2,...$; and moreover, 
that the infimum on the RHS in (\ref{E:error_ineq}) is attained for the 
KL-ONB $(\varphi_{k})$. But we see that (\ref{E:error_ineq}) is equivalent 
to the system (\ref{E:ineq}) in the Arveson-Kadison theorem.
\end{proof}

The Arveson-Kadison theorem is the assertion (\ref{E:ineq}) for trace class operators, see e.g., refs \cite{Arv06} and \cite{ArKa06}. That (\ref{E:error_ineq}) is equivalent to (\ref{E:ineq}) follows from the definitions. 

Our next theorem gives Karhunen-Lo\`{e}ve optimality for sequences of entropy
numbers.
\begin{theorem}
\label{T:6.2}
(See \cite{JS07}) The Karhunen-Lo\`{e}ve ONB gives the smallest sequence 
of entropy numbers in the approximation.
\end{theorem}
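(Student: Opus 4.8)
The plan is to obtain Theorem \ref{T:6.2} as the \emph{entropy counterpart} of Theorem \ref{T:6.1}: run the very same argument (the Arveson--Kadison majorization inequality (\ref{E:ineq})), but replace the linear trace functional used there by an appropriate concave entropy functional. Concretely, for an ONB $(\psi_i)$ ordered so that (\ref{E:psi_ineq}) holds I would set $d_k^{\psi}:=\langle\psi_k|T_J\psi_k\rangle$ and, using that $T_J$ is trace class (cf.\ (\ref{eq5.14})), normalize $p_k^{\psi}:=d_k^{\psi}/\mathrm{tr}(T_J)$, so that $(p_k^{\psi})_{k\in\mathbb{N}}$ is a probability sequence; for the KL-ONB this is $p_k^{\varphi}=\lambda_k/\mathrm{tr}(T_J)$. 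The entropy numbers attached to the approximation based on $(\psi_i)$ are then the Shannon/Rényi data of this diagonal state (the total $h^{\psi}=-\sum_k p_k^{\psi}\log p_k^{\psi}$, and more generally the family $H_\alpha^{\psi}$ for $0<\alpha<1$), and the assertion to be proved is that every such quantity is minimized, over all $(\psi_i)\in\mathrm{ONB}(\mathcal H)$, exactly at the KL-ONB $(\varphi_k)$.

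The single structural input is that (\ref{E:ineq}) — already the engine of Theorem \ref{T:6.1} — says precisely that the non-increasing sequence $(\lambda_k)$ \emph{majorizes} every diagonal sequence $(d_k^{\psi})$: one has $\sum_{k=1}^{n}\lambda_k\ge\sum_{k=1}^{n}d_k^{\psi}$ for all $n$, with equal totals $\sum_k\lambda_k=\sum_k d_k^{\psi}=\mathrm{tr}(T_J)$, and after the common normalization this is $p^{\varphi}\succ p^{\psi}$. I would then invoke the (infinite-dimensional) Schur-concavity of entropy: if $p\succ q$ then $H(p)\le H(q)$, and likewise $H_\alpha(p)\le H_\alpha(q)$ for $0<\alpha<1$, because $u\mapsto u\log u$ and $u\mapsto -u^{\alpha}$ are convex and the Hardy--Littlewood--Pólya/Karamata theorem transfers this to the sums $\sum_k$. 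Feeding $p^{\varphi}\succ p^{\psi}$ into this yields the desired inequalities; equality is realized at $(\varphi_k)$ since the KL data is itself an admissible choice of $(\psi_i)$, so the infimum is attained.

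I expect the real work to lie in the infinite-dimensional bookkeeping rather than in any new inequality. One must (i) check that $(d_k^{\psi})$ admits a non-increasing rearrangement (automatic, as it is a summable non-negative sequence), (ii) justify convergence of the entropy series — again trace class is what is needed, since $0\le p_k\le 1$ makes $-p_k\log p_k$ bounded and the tails are controlled by $\sum_{k>n}p_k\to 0$ — and (iii) pass from the genuinely infinite majorization (\ref{E:ineq}) to the entropy comparison via a limiting argument through finite truncations. Step (iii) is the main obstacle and the point where the proof departs in spirit from Theorem \ref{T:6.1}, whose functional was linear: the naive termwise comparison is \emph{not} available here, because $u\mapsto -u\log u$ is not monotone, so the estimate must be carried out for the entropy functional as a whole and not coordinate by coordinate. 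I would handle it by truncating $T_J$ to the corner $Q_N T_J Q_N$ in the $\psi$-basis, applying the finite-dimensional Schur--Horn/Schur-concavity estimate at level $N$, and then letting $N\to\infty$ with a dominated-convergence control on the entropy terms; alternatively one can cite the trace-class Arveson--Kadison/Schur--Horn theory directly.
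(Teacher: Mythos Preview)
Your approach is essentially the paper's: use the majorization inequality (\ref{E:ineq}) already established for Theorem~\ref{T:6.1}, then push it through the convex function $\beta(t)=t\log t$ via a Hardy--Littlewood--P\'olya/Karamata-type argument. The paper does exactly this, citing Remark~6.3 of \cite{ArKa06} for the step ``majorization $+$ convex $\beta$ $\Rightarrow$ partial-sum inequality for $\beta$.''

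One point of interpretation deserves mention. By ``sequence of entropy numbers'' the paper means the \emph{partial} entropy sums
\[
S_n^{\psi}(T_J):=-\sum_{k=1}^{n}\langle\psi_k|T_J\psi_k\rangle\log\langle\psi_k|T_J\psi_k\rangle,
\]
and the conclusion is the family of inequalities $S_n^{KL}(T_J)\le S_n^{\psi}(T_J)$ for every $n$. Your write-up is pitched instead at the total Shannon entropy $h^{\psi}$ (and R\'enyi variants $H_\alpha^{\psi}$). The good news is that the partial-sum formulation is both what the theorem asserts and what the Arveson--Kadison lemma delivers directly for each finite $n$; this actually \emph{sidesteps} the infinite-dimensional limiting issues you flag in step~(iii), since no $N\to\infty$ passage is needed to get the claimed inequalities. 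Your added R\'enyi generality (convexity of $u\mapsto -u^{\alpha}$ for $0<\alpha<1$) is a legitimate bonus not present in the paper.
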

\begin{proof}
We begin by a few facts about entropy of trace-class operators $T_{J}$.  The entropy
is defined as
\begin{equation}
\label{E:entropy}
S(T_{J}):= -\text{tr}(T_{J}\log{T_{J}}).
\end{equation}
The formula will be used on cut-down versions of an initial operator 
$T_{J}$.  In some cases only the cut-down might be trace-class.
Since the Spectral Theorem applies to $T_{J}$, the RHS in (\ref{E:entropy}) is also
\begin{equation}
\label{E:entropy_sum}
S(T_{J})=-\sum_{k=1}^{\infty}\lambda_{k}\log{\lambda_{k}}.
\end{equation}
For simplicity we normalize such that $1=\text{tr}T_{J}=\sum_{k=1}^{\infty}\lambda_{k}$,
and we introduce the partial sums
\begin{equation}
\label{E:entropy_part_sum1}
S_{n}^{KL}(T_{J}):=-\sum_{k=1}^{n}\lambda_{k}\log{\lambda_{k}}.
\end{equation}
and
\begin{equation}
\label{E:entropy_part_sum2}
S_{n}^{\psi}(T_{J}):=-\sum_{k=1}^{n}\langle \psi_{k}|T_{J}\psi_{k}\rangle 
\log{\langle \psi_{k}|T_{J}\psi_{k}\rangle}.
\end{equation}

Let $(\psi_{i}) \in ONB(\mathcal{H})$, and set 
$d_{k}^{\psi}:=\langle \psi_{k}|T_{J}\psi_{k}\rangle$; then the inequalities 
(\ref{E:ineq}) take the form:
\begin{equation}
\label{E:d_lambda}
  \text{tr}(Q_{n}^{\psi}T_{J})=\sum_{i=1}^{n}d_{i}^{\psi} \leq \sum_{i=1}^{n} \lambda_{i}, \quad n=1,2,...
\end{equation}
where as usual an ordering
\begin{equation}
\label{E:d_order}
d_{1}^{\psi} \geq d_{2}^{\psi} \geq ...
\end{equation}
has been chosen.

Now the function $\beta(t):=t\log{t}$ is convex.  And application of Remark 6.3 in
\cite{ArKa06} then yields 
\begin{equation}
\label{E:beta_ineq}
\sum_{i=1}^{n}\beta(d_{i}^{\psi}) \leq \sum_{i=1}^{n}\beta(\lambda_{i}), 
\quad n=1,2,... \quad \text{.}
\end{equation}
Since the RHS in (\ref{E:beta_ineq}) is $-\text{tr}(T_{J}\log{T_{J}})=-S_{n}^{KL}(T_{J})$, the desired
inequalities
\begin{equation}
\label{E:S_ineq}
S_{n}^{KL}(T_{J}) \leq S_{n}^{\psi}(T_{J}), \quad n=1,2, ...
\end{equation}
follow. i.e., the KL-data minimizes the sequence of entropy numbers.
\end{proof}

\subsection*{Acknowledgment}
\label{sec:ack}
The present work was motivated by details form a graduate course taught by 
the first named author on stochastic integration and its applications. We 
are grateful to the students in the course, especially Mr Yu Xu,  for their 
comments and their inspiration.



\newcommand{\etalchar}[1]{$^{#1}$}


\end{document}